\newcommand{\oset}[3][0ex]{%
  \mathrel{\mathop{#3}\limits^{
    \vbox to#1{\kern-1\ex@
    \hbox{$\scriptstyle#2$}\vss}}}}
\algrenewcommand\algorithmicrequire{\textbf{Given:}}
\algrenewcommand\algorithmicensure{\textbf{Output:}}
\algrenewcommand\algorithmicforall{\textbf{for each}}
\def\lastpage@putlabel{}
\newcommand{\brak}[1]{\left\langle #1\right\rangle}
\newcommand{\abs}[1]{\left\lvert #1\right\rvert}
\newcommand{\infnorm}[1]{\Vert #1\Vert_\infty}
\renewcommand{\Bar}[1]{\overline{#1}}
\newcommand{\eps}{\varepsilon}
\newcommand{\Number}[1]{\mathbb{#1}}
\newcommand{\Filtration}[1]{\mathcal{#1}}
\newcommand{\Text}[1]{\text{#1}}
\newcommand{\R}{\Number{R}}
\newcommand{\Z}{\Number{Z}}
\renewcommand{\S}{\Number{S}}
\newcommand{\G}{\Filtration{G}}
\newcommand{\cG}{\mathcal{G}}
\newcommand{\wt}{\Text{c}}
\newcommand{\Ghat}{\hat{G}}
\newcommand{\tb}{{t^-}}
\newcommand{\ta}{{t^+}}
\newcommand{\uheap}{\mathcal{H}_U}
\newcommand{\lheap}{\mathcal{H}_L}
\DeclareMathOperator{\aug}{\mathrm{Aug}}
\DeclareMathOperator{\Aug}{\aug}
\DeclareMathOperator{\ext}{\mathrm{ext}}
\DeclareMathOperator{\PHT}{\mathrm{PHT}}
\DeclareMathOperator{\Dgm}{\mathrm{Dgm}}
\newtheorem*{theorem*}{Theorem}
\newtheorem{theorem}{Theorem}[section]
\newtheorem{proposition}[theorem]{Proposition}
\newtheorem{lemma}[theorem]{Lemma}
\theoremstyle{definition}
\newtheorem{definition}[theorem]{Definition}
\newtheorem{remark}[theorem]{Remark}
\title{\LARGE The Kinetic Hourglass  Data Structure for Computing the Bottleneck Distance of Dynamic Data}
\author[1, 2]{\large Elizabeth Munch}
\author[3]{\large Elena Xinyi Wang}
\author[4]{\large Carola Wenk}
\affil[1]{\footnotesize Department of Computational Mathematics, Science, and Engineering, Michigan State University}
\affil[2]{\footnotesize Department of Mathematics, Michigan State University}
\affil[3]{\footnotesize Department of Informatics, University of Fribourg}
\affil[4]{\footnotesize Department of Computer Science, Tulane University}
\begin{document}
\date{}
\maketitle

\begin{abstract}
The kinetic data structure (KDS) framework is a powerful tool for maintaining various geometric configurations of continuously moving objects. 
In this work, we introduce the kinetic \emph{hourglass}, a novel KDS implementation designed to compute the bottleneck distance for geometric matching problems. 
We detail the events and updates required for handling general graphs, accompanied by a complexity analysis. 
Furthermore, we demonstrate the utility of the kinetic hourglass by applying it to compute the bottleneck distance between two persistent homology transforms (PHTs) derived from shapes in $\mathbb{R}^2$, which are topological summaries obtained by computing persistent homology from every direction in $\mathbb{S}^1$.
\end{abstract}

\section{Introduction}
Motion is a fundamental property of the physical world. 
To address this challenge computationally, Basch et al.~proposed the \textit{kinetic data structure} (KDS) \cite{BaschGuibasHeapOG1997}, designed to maintain various geometric configurations for continuously moving objects. 
The KDS frameworks have been applied to many geometric problems since it was introduced, including but not limited to finding the convex hull of a set of moving points in the plane \cite{GuibasKDS1999}, the closest pair of such a set \cite{GuibasKDS1999}, a point in the center region \cite{Agarwal2005Center}, kinetic medians and $kd$-trees \cite{Agarwal2002kdTree}, and range searching; see \cite{GuibasSurvay} for a survey.
In this work, we extend the framework to the geometric matching problem.
Specifically, we are interested in the min-cost matching of a weighted graph with continuously changing weights on the edges. 

This is related to the problem of comparing \textit{persistence diagrams}, a central concept in topological data analysis (TDA).
Intuitively, a persistence diagram is a visual summary of the topological features of a given object.
We can learn how different or similar two objects are by comparing their persistence diagrams.
One of the most common comparisons is to compute the bottleneck distance, a dual of the min-cost matching problem in graphs.
Improving the algorithm to compute the bottleneck distance has been studied extensively, both theoretically and practically \cite{Dey2018Bottleneck, HopcroftKarp1973,Efrat2001,KerberGeometryHelps2017, Cabello2024Matching}.

\emph{Vineyards} are continuous families of persistence diagrams that represent persistence for time-series of continuous functions \cite{Morozov2006Vineyard, Xian2022Crocker, Kim2021}.
A specific case of vineyards is the persistent homology transform (PHT) \cite{CurryPHT2018, GhristPHT2018,TurnerPHT2014}, which is the family of persistence diagrams of a shape in $\R^d$ computed from every direction in $\S^{d-1}$.
The PHT possesses desirable properties such as continuity,  injectivity, and stability. 
However, an efficient method for comparing two PHTs has not been thoroughly investigated. 
In particular, existing approaches to compute the bottleneck distance between two PHTs of shapes in $\R^2$ rely on sampling various directions, which only provides approximate results. 
To date, there is no known solution for computing the exact bottleneck distance. 
This work presents the first method to address this challenge, providing a precise and efficient solution.

\textbf{Our contribution}: We construct a new kinetic data structure that maintains a min-cost matching and the bottleneck distance of a weighted bipartite graph with continuously changing weight functions. 
We evaluate the data structure based on the four standard metrics in the KDS framework.
We provide a specific case study where the weighted graph is computed from two persistent homology transforms, resulting in the first exact distance computation between two PHTs.

The paper is organized as follows: we cover the bottleneck matching problem's background and the kinetic data structures' preliminaries in Section \ref{Sec:Background}. 
In Section \ref{Sec:KDS}, we iterate the events and updates necessary to maintain the KDS and the overall complexity analysis. 
Finally, we focus in Section \ref{Sec:PHT} on our case study of the persistent homology transform.

\section{Background}
\label{Sec:Background}
Broadly, we are interested in a geometric matching problem. 
Given an undirected graph $G = (V, E)$, a \emph{matching} $M$ of $G$ is a subset of the edges $E$ such that no vertex in $V$ is incident to more than one edge in $M$. 
A vertex $v$ is said to be \emph{matched} if there is an edge $e\in M$ that is incident to $v$.
A matching is \emph{maximal} if it is not properly contained in any other matching. 
A \emph{maximum} matching $M$ is the matching with the largest cardinality; i.e., for any other matching $M'$, $\abs{M} \geq \abs{M'}$.
A maximum matching is always maximal; the reverse is not true. 

For a graph $G = (X\sqcup Y, E)$ where $\abs{X}=\abs{Y} = n$ and $\abs{E} = m$, a maximum matching is a \emph{perfect} matching if every $v\in X\sqcup Y$ is matched, and $\abs{M} = n$.
This can be expressed as a bijection $\eta:X\rightarrow Y$.
For a subset $W\subseteq X$, let $N(W)$ denote the neighborhood of $W$ in $G$, the set of vertices in $Y$ that are adjacent to at least one vertex of $W$.
Hall's marriage theorem provides a necessary condition for a bipartite graph to have a perfect matching.

\begin{theorem}[Hall's Marriage Theorem]
\label{thm:Hall}
    A bipartite graph $G = (X\sqcup Y, E)$ has a perfect matching if and only if for every subset $W$ of $X$: $\abs{W}\leq \abs{N(W)}.$
\end{theorem}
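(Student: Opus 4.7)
The plan is to prove both directions separately, with the forward direction being a short observation and the backward direction being the substantive part.

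For the forward ($\Rightarrow$) direction, I would argue as follows. Suppose $G$ admits a perfect matching $\eta : X \to Y$. For any $W \subseteq X$, the restriction $\eta|_W$ maps $W$ injectively into $N(W)$, because $\eta(w)$ must be a neighbor of $w$ and hence lies in $N(W)$. Therefore $|W| = |\eta(W)| \leq |N(W)|$.

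For the backward ($\Leftarrow$) direction, I would induct on $n = |X|$. The base case $n = 1$ is immediate since Hall's condition forces $|N(\{x\})| \geq 1$, giving at least one edge to use. For the inductive step I would split into two cases according to whether Hall's condition holds strictly on all proper nonempty subsets. In the \emph{strict case}, assume $|N(W)| > |W|$ for every nonempty $W \subsetneq X$. Pick any edge $xy \in E$, include it in the matching, and consider the reduced graph $G'$ on $(X \setminus \{x\}) \sqcup (Y \setminus \{y\})$. For any $W' \subseteq X \setminus \{x\}$, deleting $y$ drops $|N(W')|$ by at most one, so $|N_{G'}(W')| \geq |N_G(W')| - 1 \geq |W'|$; Hall's condition holds in $G'$ and induction supplies a perfect matching of the remainder. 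In the \emph{tight case}, some nonempty proper $W \subsetneq X$ satisfies $|N(W)| = |W|$. Apply induction to the bipartite subgraph $G_1$ induced by $W \sqcup N(W)$ (whose Hall condition is inherited directly) to get a perfect matching there. Then consider $G_2$ induced by $(X \setminus W) \sqcup (Y \setminus N(W))$; for any $U \subseteq X \setminus W$ the $G$-Hall condition applied to $W \cup U$ gives $|W| + |U| \leq |N_G(W \cup U)| = |N(W)| + |N_{G_2}(U)| = |W| + |N_{G_2}(U)|$, so $|N_{G_2}(U)| \geq |U|$. Induction supplies a perfect matching of $G_2$, and the two matchings combine into a perfect matching of $G$.

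The main obstacle is the tight case of the inductive step: one has to recognize that the equality $|N(W)| = |W|$ gives a clean way to split the problem, and then verify carefully that Hall's condition is preserved in both halves (using that neighbors of $X \setminus W$ that were already absorbed into $N(W)$ cannot help vertices in $X \setminus W$ in the subgraph $G_2$). Once that bookkeeping is in place, the rest is routine.
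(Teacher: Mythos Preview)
Your proof is correct and is exactly the standard inductive argument for Hall's theorem. Note, however, that the paper does not actually prove this statement: Theorem~\ref{thm:Hall} is quoted without proof as classical background, so there is no ``paper's proof'' to compare against. Your write-up would serve perfectly well as a self-contained justification; the only nitpick is that in the tight case you write $|N_G(W\cup U)| = |N(W)| + |N_{G_2}(U)|$, which is in fact an equality (since $N_G(W\cup U) = N(W)\sqcup N_{G_2}(U)$ is a disjoint union), but it might be worth saying one word about why the union is disjoint.
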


Building on Hall's Marriage Theorem, we extend our focus to weighted bipartite graphs. 
Given such a graph, a fundamental optimization problem is to identify matchings that minimize the maximum edge weight, known as the \emph{bottleneck cost}.

\begin{definition}
    
A weighted graph $\mathcal{G} = (G, \wt)$ is a graph $G$ together with a weight function $\wt: E\rightarrow\R_+$. 
The \emph{bottleneck cost} of a matching $M$ for such a $\mathcal{G}$ is $\max\{\wt(e)\mid e\in M\}$.
The \emph{bottleneck edge} is the highest weighted edge in $M$, assuming this is unique.
A perfect matching is \emph{optimal} if its cost is minimal among all perfect matchings.
An optimal matching is also called a \emph{min-cost} matching.
\end{definition}

To find a maximum matching of a graph, we use augmenting paths.

\begin{definition}
	For a graph $G$ and matching $M$, a path $P$ is an \emph{augmenting path} for $M$ if:
	\begin{enumerate}
		\item the two end points of $P$ are unmatched in $M$, and
		\item the edges of $P$ alternate between edges $e\in M$ and $e\notin M$.
	\end{enumerate}
\end{definition}

\begin{theorem}[Berge's Theorem]
\label{thm:Berge}
    A matching $M$ in a graph $G$ is a maximum matching if and only if $G$ contains no $M$-augmenting path.
\end{theorem}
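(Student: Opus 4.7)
The plan is to prove both directions by contrapositive, using the symmetric difference of matchings as the central tool.

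For the forward direction, I would show that if an $M$-augmenting path $P$ exists, then $M$ is not maximum. Since the edges of $P$ alternate between non-matching and matching edges and both endpoints are unmatched, $P$ must begin and end with edges not in $M$, so $P$ contains one more non-matching edge than matching edge. Forming the symmetric difference $M' = M \triangle E(P)$ produces a new edge set: I would check that $M'$ is indeed a matching (no vertex appears in two edges, because at each internal vertex of $P$ we swap one matched edge for one unmatched edge, and the endpoints of $P$ were previously unmatched), and that $|M'| = |M| + 1$. This shows $M$ is not maximum.

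For the backward direction, suppose $M$ is not maximum and let $M^*$ be a matching with $|M^*| > |M|$. I would consider the symmetric difference $H = M \triangle M^*$ as a subgraph of $G$. Since every vertex is incident to at most one edge of $M$ and at most one edge of $M^*$, every vertex in $H$ has degree at most $2$. Therefore each connected component of $H$ is either an isolated vertex, a path, or a cycle, and along each such component the edges alternate between $M$ and $M^*$ (two consecutive edges of the same matching would share a vertex of degree $>1$ in that matching).

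Now I would perform a counting argument: each cycle component has even length and contains equally many edges of $M$ and $M^*$, so since $|M^*| > |M|$ there must exist a path component $P$ containing strictly more edges of $M^*$ than of $M$. Such a path must begin and end with edges of $M^*$, and I would argue its endpoints are unmatched in $M$ (if an endpoint $v$ of $P$ were matched in $M$ by some edge $e$, then either $e \in M^*$, contradicting that $v$ has degree $1$ in $H$ via an $M^*$-edge, or $e \notin M^*$, in which case $e \in H$ and would extend $P$). Thus $P$ is the desired $M$-augmenting path. The main obstacle is being careful in this last step to rigorously justify that the endpoints of the extremal path component are in fact unmatched in $M$, since this is where the alternating structure and the degree bound must be combined precisely.
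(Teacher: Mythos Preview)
Your proof is the standard, correct argument for Berge's theorem via symmetric differences, and both directions are handled cleanly; the endpoint verification you flag as the ``main obstacle'' is done correctly. However, the paper does not actually prove Theorem~\ref{thm:Berge}: it is stated as a classical result and then invoked as a tool in the proofs of Lemmas~\ref{lem:M} and~\ref{lem:U}. So there is no paper proof to compare against; your proposal simply supplies the omitted textbook argument.
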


The existing algorithms that compute the bottleneck cost are derived from the Hopcroft-Karp maximum matching algorithm~\cite{HopcroftKarp1973}, which we briefly review.
Given a graph $G = (X\sqcup Y, E)$ where $\abs{E} = n$, and an initial matching $M$, the algorithm iteratively searches for augmenting paths $P$.
Each phase begins with a breadth-first search from all unmatched vertices in $X$, creating a layered subgraph of $G$ by alternating between $e\in M$ and $e\notin M$.
It stops when an unmatched vertex in $Y$ is reached.
From this layered graph, we can find a maximal set of vertex-disjoint augmenting paths of the shortest length.
For each $P$, we augment $M$ by replacing $M\cap P$ with $P\backslash M$.
We denote this process as $\Aug(M, P) = M\setminus(M\cap P)\cup(P\setminus M)$.
Note that $\abs{\aug(M,P)} = \abs{M}+1$, so we can  repeat the above process until no more augmenting paths can be found.
By Theorem \ref{thm:Berge}, the resulting $M$ is maximum.
The algorithm terminates in $O(\sqrt{n})$ rounds, resulting in a total running time of $O(n^{2.5}).$
This algorithm was later improved to $O(n^{1.5}\log n)$ for geometric graphs by Efrat et al.\ by constructing the layered graph via a near-neighbor search data structure \cite{Efrat2001}.

\subsection{Bottleneck Distance}
\label{sec:BackgroundPersistence}
    Let $X$ and $Y$ be two sets of $n$ points. 
    We consider the bipartite graph obtained by adding edges between points whose weight $\wt: E \to \R_{\geq0}$ is at most $\lambda$:
    \[
    G_\lambda = (X\sqcup Y, \{xy\mid \wt(xy) \leq \lambda\}).
    \]

    \begin{definition}
    The \emph{optimal bottleneck cost} is 
    the minimum $\lambda$ such that $G_\lambda$ has a perfect matching, denoted as $d_B(G)$. 
    \end{definition}    

    We are interested in computing the bottleneck distance in the context of \emph{persistent homology}.
    Persistent homology is a multi-scale summary of the connectivity of objects in a nested sequence of subspaces; see \cite{DeyWang2017} for an introduction. 
    For the purposes of this section, we can define a 
    \emph{persistence diagram} to be a finite collection of points $\{(b_i,d_i)\}_i$ with $d_i \geq b_i$ for all $i$. 
    Further details and connections to the input data will be given in Section \ref{Sec:PHT}.
    
    Given two persistence diagrams $X$ and $Y$, a partial matching is a bijection $\eta:X' \to Y'$ on a subset of the points $X' \subseteq X$ and $Y' \subseteq Y$; we denote this by $\eta: X \rightleftharpoons Y$. 
    The cost of a partial matching is the maximum over the $L_\infty$-norms of all pairs of matched points and the distance between the unmatched points to the diagonal:
    \begin{equation*}
        c(\eta) = \max \left( 
        \{ \|x-\eta(x)\|_\infty \mid x \in X'\} 
        \cup
        \{ \tfrac{1}{2}|z_2-z_1|  \mid 
        (z_1,z_2) \in (X \setminus X') \cup (Y \setminus Y') \}
        \right)
    \end{equation*}
    and 
    the bottleneck distance is defined as
    $
    d_B(X, Y) = \inf_{\eta:X\rightleftharpoons Y} c(\eta)
    $.

    We now reduce finding the bottleneck distance between persistence diagrams to a problem of finding the bottleneck cost of a bipartite graph.
    Let $X$ and $Y$ be two persistence diagrams given as finite lists of off-diagonal points. 
    For any off-diagonal point $z = (z_1, z_2)$, the orthogonal projection to the diagonal is 
    $z' = ((z_1+z_2)/2, (z_1+z_2)/2)$. 
    Let $\Bar{X}$ (resp.~$\Bar{Y}$) be the set of orthogonal projections of the points in $X$ (resp.~$Y$). 
    Set $U = X\sqcup \Bar{Y}$ and $V = Y\sqcup \Bar{X}$.
    We define the complete bipartite graph $G = (U\sqcup V, U\times V, \wt)$, where for $u\in U$ and $v\in V$, the weight function $\wt$ is given by
    \[
    \wt(uv) = 
    \begin{cases}
        \infnorm{u - v} &\text{if $u\in X$ or $v\in Y$}\\
        0 &\text{if $u\in \Bar{X}$ and $v\in \Bar{Y}$}.
    \end{cases}
    \]
    An example of the bipartite graph construction is shown in Figure \ref{fig:persBottleneck}.
    This graph can be used to compute the bottleneck distance of the input diagrams because of the following lemma.
    \begin{lemma}[Reduction Lemma \cite{EdelsHarer2010}]
        For the above construction of $G$,  $d_B(G) = d_B(X, Y)$.
    \end{lemma}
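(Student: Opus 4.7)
The plan is to prove the two inequalities $d_B(G) \le d_B(X,Y)$ and $d_B(G) \ge d_B(X,Y)$ by translating between partial matchings of the diagrams and perfect matchings of the bipartite graph $G$. In both directions the construction is explicit; the only real work is verifying that the cost never increases under the translation.

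For the inequality $d_B(G) \le d_B(X,Y)$, I would start with an arbitrary partial matching $\eta:X'\to Y'$ realizing cost $c(\eta)$ and construct a perfect matching $M$ of $G$ as follows: match each $x\in X'$ (viewed in $U$) to $\eta(x)\in V$; match each unmatched $x\in X\setminus X'$ (in $U$) to its diagonal projection $\bar x \in V$; match each unmatched $y\in Y\setminus Y'$ (in $V$) to $\bar y \in U$; and finally match the remaining diagonal vertices $\bar y\in U$ (for $y\in Y'$) to the remaining $\bar x\in V$ (for $x\in X'$) in any bijection. By the weight definition, every edge of $M$ has weight either $\|x-\eta(x)\|_\infty$, $\tfrac12|z_2-z_1|$ for some unmatched diagonal point, or $0$, so the bottleneck cost of $M$ is at most $c(\eta)$. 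Taking an infimum over $\eta$ yields $d_B(G)\le d_B(X,Y)$.

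For the reverse inequality $d_B(G)\ge d_B(X,Y)$, I would take any perfect matching $M$ of $G$ of cost $\lambda$ and extract a partial matching $\eta:X\rightleftharpoons Y$ whose domain $X'\subseteq X$ consists of those $x\in X$ whose partner in $M$ lies in $Y\subseteq V$, with $\eta(x)$ defined to be that partner. For each matched pair the corresponding weight in $G$ equals $\|x-\eta(x)\|_\infty\le\lambda$, so those contributions to $c(\eta)$ are bounded by $\lambda$. The main remaining obligation is the unmatched off-diagonal points: every unmatched $x\in X\setminus X'$ is paired in $M$ with some diagonal vertex $\bar x'\in V$, and analogously for unmatched $y$.

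The key geometric fact I would invoke here is that $\|x - \bar x'\|_\infty \ge \tfrac12|x_2-x_1|$ for any $x'$; this follows from the observation that for any real $t$, $\max(|x_1-t|,|x_2-t|)\ge\tfrac12|x_1-x_2|$ by the triangle inequality applied to $x_1,t,x_2$, and the point $\bar x'$ has both coordinates equal. Thus each unmatched off-diagonal point contributes at most $\lambda$ to $c(\eta)$ as well, giving $c(\eta)\le\lambda$ and hence $d_B(X,Y)\le d_B(G)$. I expect this last geometric estimate to be the only subtle step; the rest is bookkeeping around which subset of $U$ and $V$ a vertex belongs to, so I would be careful to enumerate the four possible types of partners ($X$--$Y$, $X$--$\bar X$, $\bar Y$--$Y$, $\bar Y$--$\bar X$) and confirm each contributes the expected bound.
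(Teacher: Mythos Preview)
Your proposal is correct and is essentially the standard argument from the cited source \cite{EdelsHarer2010}; the paper itself does not supply a proof of this lemma but merely invokes the reference, so there is no in-paper argument to compare against. The two-direction translation you outline (partial matching $\leadsto$ perfect matching via diagonal projections, and back using the triangle-inequality bound $\max(|x_1-t|,|x_2-t|)\ge\tfrac12|x_1-x_2|$) is exactly the proof Edelsbrunner and Harer give.
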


Naively, given a graph $G$ of size $n$, we can compute the bottleneck distance by sorting the edge weights and performing a binary search for the smallest $\lambda$ such that $G_\lambda$ has a perfect matching in $O(n^2\log n)$ time, which dominates the improved Hopcroft-Karp algorithm. 
Using the technique for efficient $k$-th distance selection for a bi-chromatic point set under the $L_\infty$ distance introduced by Chew and Kedem, this can be reduced to $O(n^{1.5}\log n)$ \cite{ChewKedem1992}.
Therefore, the overall complexity to compute the bottleneck distance of a static pair of persistence diagrams is $O(n^{1.5}\log n)$.

\begin{figure}
    \centering
    \includegraphics[width=0.6\linewidth]{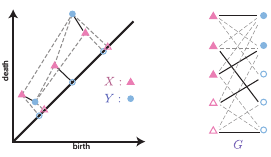}
    \caption{Construction of the bipartite graph $G$ based on the persistence diagrams $X$ and $Y$. 
    }
    \label{fig:persBottleneck}
\end{figure}
    
\subsection{Kinetic Data Structure}
    A kinetic data structure (KDS) \cite{GuibasKDS1998STA, GuibasKDS1999, GuibasSurvay} maintains a system of objects $v$ that move along a known continuous \emph{flight plan} as a function of time, denoted as $v = f(v)$. 
    \emph{Certificates} are conditions under which the data structure is accurate, and \emph{events} track when the certificate fails and what corresponding \emph{updates} need to be made.
    The certificates of the KDS form a proof of correctness of the current configuration function at all times. 
    Updates are stored in a priority queue, keyed by event time.
    Finally, to advance to time $t$, we process all the updates keyed at times before $t$ and pop them from the queue after updating.
    We continue until $t$ is smaller than the first event time in the queue.

    A KDS is evaluated by four measures. 
    We say a quantity is \emph{small} if it is a polylogarithmic function of $n$, or is $O(n^\eps)$ for arbitrarily small $\eps$, and $n$ is the number of objects. 
    A KDS is considered \emph{responsive} if the worst-case time required to fix the data structure and augment a failed certificate is small.
    \textit{Locality} refers to the maximum number of certificates in which any one value is involved.
    A KDS is local if this number is small.
    A \emph{compact} KDS is one where the maximum number of certificates used to augment the data structure at any time is $O(n \text{ polylog }n)$ or $O(n^{1+\eps}).$
    Finally, we distinguish between \emph{external events}, i.e., those affecting the configuration function (e.g., maximum or minimum values) we are maintaining, and \emph{internal events}, i.e., those that do not affect the outcome. 
    We consider the ratio between the worst-case total number of events that can occur when the structure is advanced to $t=\infty$ and the worst-case number of external to the data structure. 
    The second number is problem-dependent. 
    A KDS is \emph{efficient} if this ratio is small.

    We next explore kinetic solutions to upper- and lower-envelope problems. The deterministic algorithm uses a kinetic heap, while the more efficient kinetic hanger adds randomness.
    
\subsubsection{Kinetic Heap}
    The kinetization of a heap results in a \emph{kinetic heap}, which maintains the priority of a set of objects as they change continuously. 
    Maximum and minimum are both examples of priorities. 
    A kinetic heap follows the properties of a static heap such that objects are stored as a balanced tree.
    If $v$ is a child node of $u$, then $u$ has higher priority than $v$. 
    In the example of a max heap, that means $u>v$.
    In a kinetic max heap, the value of a node is stored as a function of time $f_X(t)$.
    The data structure augments a certificate $[A>B]$ for every pair of parent-child nodes $A$ and $B$.
    It is only valid when $f_A(t)>f_B(t).$
    Thus, the failure times of the certificate are scheduled in the event queue at time $t$ such that $f_A(t)=f_B(t).$
    When a certificate $[A>B]$ fails, we swap $A$ and $B$ in the heap and make $5$ parent-child updates.
    This is the total number of certificates in which any pair of $A$ and $B$ are present.
    The kinetic max heap supports operations similar to a static heap, such as create-heap, find-max, insert, and delete. 
    Insertion and deletion are done in $O(\log n)$ time.

    This KDS satisfies the responsiveness criteria because, in the worst case, each certificate failure only leads to $O(1)$ updates. 
    Locality, compactness, and efficiency depend on the behavior of $f_X(t)$.
    The analysis below assumes the case of affine motion where $f_X(t) = at +b$.
    In the worst case, each node is present in only three certificates, one with its parent and two with children, resulting in $3$ events per node. 
    For compactness, the total number of scheduled events is $n-1$ for $n$ nodes in the kinetic heap. 
    It is also efficient, where the maximum number of events processed is $O(n\log n)$ for a tree of height $O(\log n)$.
    The total time complexity in the linear case is $O(n\log^2 n)$ \cite{BaschGuibasHeapOG1997}.
    The results for locality and compactness hold for the more general setting of $s$-intersecting curves, where each pair of curves intersect at most $s$ times, $s\in\Z$. 
    The efficiency is unknown in this case, and the time complexity is $O(n^2\log^2 n)$ \cite{BaschGuibas2003Curves}.

    When the functions are only defined within an interval of time, we call %
    them segments.
    We perform an insertion when a new segment appears and a deletion when a segment disappears.
    In the scenario of affine motion segments, the complexity of the kinetic heap is $O(n\sqrt{m}\log^{3/2}m)$, $n$ is the total number of elements and $m$ is the maximum number of elements stored at a given time.
    In the case of $s$-intersecting curve segments, the complexity of the kinetic heap is $O(mn\log^2 m)$.
    
\begin{figure}
    \centering
    \includegraphics[width=0.85\linewidth]{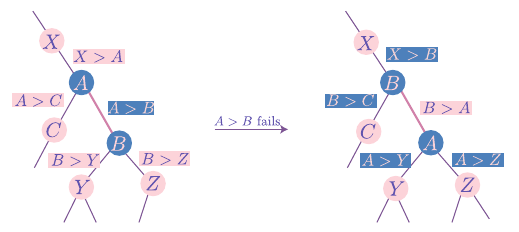}
    \caption{Kinetic heap event updates}
    \label{fig:kineticHeap}
\end{figure}

\subsubsection{Kinetic Hanger}
    More efficient solutions to this problem have been developed, such as the kinetic \emph{heater} and \emph{tournament} \cite{BaschGuibas2003Curves, GuibasKDS1999}.
    While both improve some aspects of the heap, they each have downsides.
    The kinetic heater increases the space complexity and is difficult to implement, and the tournament has a high locality bound.
    We instead use the \emph{kinetic hanger}, introduced by da Fopnseca et al.\ in \cite{daFonseca2004Hanger}.
    It modifies the kinetic heap by incorporating randomization to balance the tree, meaning all complexity results are probabilistic rather than deterministic.
    
    Given an initial set of elements $S$, sorted by decreasing priorities, we construct a hanger $hanger(S)$ by so-called \emph{hanging} each element at the root.
    To hang an element $e$ at node $v$, if no element is assigned to $v$, we assign $e$ to $v$ and return.
    Otherwise, we use a random seed $r$ to choose a child $c_r$ of $v$ and recursively hang $e$ at $c_r$.
    Insertion is similar to hanging
    - it only requires the additional step of comparing the priorities of $e$ and $e'$, where $e'$ is the element already assigned to $v$.
    Deletion can be done by removing the desired element and going down the tree, replacing the current element with its child with the highest priority.

    As shown in \cite{daFonseca2004Hanger}, the kinetic hanger has $O(1)$ locality.
    The number of expected events in the affine motion case is $O(n^2\log n)$ and $O(\lambda_s(n)\log n)$ for $n$ $s$-intersecting curves.
    The expected runtime is obtained by multiplying by $O(\log n)$ for each event.
    We use $\alpha(\cdot)$ to denote the inverse Ackerman function and
    $\lambda_s$ is the length bound for the Davenport-Schinzel Sequence; see \cite{Agarwal2000} for details.
    When the functions are partially defined, the complexities are $O(n\alpha(m)\log^2m)$ for line segments and $O(n/m\lambda_{s+2}(m)\log^2 m)$ for curve segments.

\begin{table}
\centering
    \begin{tabular}{c|c|c}
        \textbf{Scenario} & \textbf{Kinetic Heap} & \textbf{Kinetic Hanger} \\
        \hline
        Lines & $O(n\log^2 n)$ & $O(n\log^2 n)$\\
        \hline
        Line segments & $O(n\sqrt{m}\log^{3/2}m)$ & $O(n\alpha(m)\log^2m)$\\
        \hline
        $s$-intersecting curves & $O(n^2\log^2 n)$ & $O(\lambda_s(n)\log^2 n)$\\
        \hline
        $s$-intersecting curve segments  & $O(mn\log^2 m)$ & $O(n/m\lambda_{s+2}(m)\log^2 m)$
    \end{tabular}
    \caption{Deterministic complexity of the kinetic heap and expected complexity of the kinetic hanger. Here, $n$ is the total number of elements, $m$ denotes the maximum number of elements stored at a given time, $s$ is a constant.}
    \label{tbl:complexity}
\end{table}

\section{Kinetic Hourglass}
\label{Sec:KDS}
In this section, we introduce a new kinetic data structure that keeps track of the optimal bottleneck cost of a weighted graph $\mathcal{G} = (G, \wt)$ where $\wt$ changes continuously with respect to time $t$.
The \emph{kinetic hourglass} data structure is composed of two kinetic heaps; in Section \ref{sec:hanger} we will give the details for replacing these with kinetic hangers. 
One heap maintains minimum priority, and the other maintains maximum. 
Assume we are given a connected bipartite graph $G = (V,E)$ with the vertex set $V = X\sqcup Y$, where $\abs{X} = \abs{Y} = n$; and edge set $E$, where $\abs{E} = m$.
If $G$ is a complete bipartite graph, then $m = n^2$.
The weight of the edges at time $t$ is given by $\wt^t:E \to \R_{\geq 0}$.
Denote the weighted bipartite graph by $\G^t = (G, \wt^t)$. 
The weights of those $m$ edges are the objects we keep track of in our \emph{kinetic hourglass}.
We assume that these weights, called flight plans in the kinetic data structure setting, are given for all times $t \in [0,T]$. 

Let $G_\delta^t \subseteq G$ be the portion of the complete bipartite graph with all edges with weight at most $\delta$ at time $t$; i.e., $V(G_\delta^t) = V$, and $E(G_\delta^t) = \{ e \in E \mid \wt^t(e) \leq \delta \}$. 
If the bottleneck distance $\hat{\delta}^t=d_B(\cG^t)$ is known, we are focused on the bipartite graph $G_{\hat{\delta}}^t$, which we will denote by $\Ghat^t$ for brevity. 
By definition, we know that there is a perfect matching in $\Ghat^t$, which we denote as $M^t$, although we note that this is not unique. 
Further, there is an edge $\hat{e}^t \in M^t$ with $\wt(\hat{e}^t) = \hat{\delta}^t$, which we call the \emph{bottleneck edge}. 
This edge is unique as long as all edges have unique weights.
We separate the remaining edges into the sets  
\begin{align*}
    L^t &= E(\Ghat^t)\setminus E(M^t), \text{ and }\\
    U^t &= E\setminus E(\Ghat^t) = \{ e \in E \mid \wt^t(e) >\hat{\delta}^t\}
\end{align*}
so that $E = L^t \sqcup M^t \sqcup U^t $.

The kinetic hourglass consists of the following kinetic max heap and kinetic min heap.
The {\em lower heap} $\lheap$ is the max heap containing $L^t \sqcup M^t = E(\Ghat^t)$.
The {\em upper heap} $\uheap$ is the min heap containing $U^t \cup \{ \hat {e}\}$.
Note that 
$\wt^t(\hat{e}) =  \max \{\wt^t(e) \mid e \in L^t \sqcup M^t  \}$ and 
$\wt^t(\hat{e}) <  \min \{\wt^t(e) \mid e \in U^t  \}$, meaning that $\hat{e}$ is the root for both heaps.

\begin{figure}
    \centering
    \includegraphics[width=\linewidth]{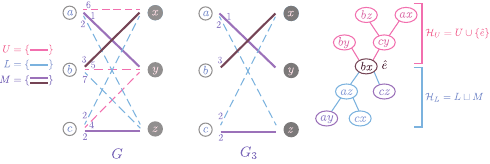}
    \caption{Illustration of construction of the kinetic hourglass.}
    \label{fig:Hourglass}
\end{figure}

\subsection{Certificates}

The certificates for the kinetic hourglass (i.e., properties held by the data structure for all $t\in[0,T]$) are
\begin{enumerate}
    \item 
All max-heap certificates for $\mathcal{H}_L$ and min-heap certificates for $\mathcal{H}_U$.

    \item 
Both heaps have the same root, denoted $r^t$.
    \item The edge $r^t$ is the edge with bottleneck cost; i.e.,
$r^t = \hat{e}^t$ where $c^t(\hat{e}^t) = \hat{\delta}^t$.
\end{enumerate}

Assuming the certificates are maintained, $r^t$ and $\hat e^t$ are the same edge. 
However, in the course of proofs, bottleneck edge of the matching is denoted by $\hat e^t$, while we use $r^t$  for the edge stored in the root of the two heaps (or $r^t(\uheap)$ and $r^t(\lheap)$ when a distinction is needed).

\subsection{Events}

For a particular event time $t$, we denote the moment of time just before an event by $t^- = t - \varepsilon$ and the moment of time just after by $t^+ = t + \varepsilon$.
For two edges, we write $a \preccurlyeq_t b$ to mean that  $\wt^t(a)\leq \wt^t(b)$. 
The two heaps have their own certificates, events (both internal and external), and updates as in the standard setting.
Our main task of this section is to determine which events in the heaps lead to an external event in the hourglass. 
We define the external events for the hourglass as those events in $\uheap$ and $\lheap$ which lead to changes of the root, $r^t$.
Thus, internal events are those which do not affect the roots; i.e.,
$r^\tb{(\uheap)} = r^\ta{(\uheap)}$ and $r^\tb{(\lheap)} = r^\ta{(\lheap)}$.

We first show that an internal event of $\uheap$ or $\lheap$ is an internal event of the hourglass. 

\begin{lemma}
\label{lem:Hall}
    If the event at time $t$ is an internal event of $\uheap$ or $\lheap$ and the kinetic hourglass satisfies all certificates at time $\tb$,
    then the edge giving the bottleneck distance for times $\tb$ and $\ta$ is the same; that is, $\hat{e}^\tb = \hat{e}^\ta$.
\end{lemma}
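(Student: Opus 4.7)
The plan is to show that, under the lemma's assumptions, the root $r := r^\tb$ of both heaps remains the bottleneck edge at $\ta$. The structural observation that makes this work is that, by definition, an internal heap event only swaps a non-root parent-child pair, so $r^\ta = r^\tb =: r$. Moreover, the underlying edge sets of $\lheap$ and $\uheap$ (as subsets of $E$) are unchanged across the event; only the internal arrangement of one heap is reshuffled, and the update restores the relevant max- or min-heap invariant so that certificate (1) holds again at $\ta$ for both heaps. By certificate (3) at $\tb$, $r = \hat{e}^\tb$ and $\wt^\tb(r) = \hat{\delta}^\tb$, so it remains to verify $r = \hat{e}^\ta$.

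I would proceed in two steps. First, establish $\hat{\delta}^\ta \leq \wt^\ta(r)$ by exhibiting $M^\tb$ itself as a perfect matching at $\ta$: every edge of $M^\tb$ sits in $\lheap$, and by the max-heap property at $\ta$ with root $r$, every such edge has weight at most $\wt^\ta(r)$ at $\ta$, with equality attained by $r\in M^\tb$.

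Second, suppose for contradiction that some perfect matching $M'$ achieves bottleneck cost $\delta < \wt^\ta(r)$ at $\ta$. Then every edge of $M'$ has weight strictly below $\wt^\ta(r)$ at $\ta$; the min-heap property of $\uheap$ at $\ta$ (root $r$) forces $M' \cap U^\tb = \emptyset$, and $r \notin M'$. Hence $M' \subseteq (L^\tb \cup M^\tb) \setminus \{r\}$. I would then transport this back to $\tb$: by the max-heap property of $\lheap$ at $\tb$ and the assumed uniqueness of edge weights, every edge of $\lheap \setminus \{r\}$ has weight strictly less than $\wt^\tb(r) = \hat{\delta}^\tb$ at $\tb$. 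Thus $M'$ is a perfect matching at $\tb$ with bottleneck cost strictly less than $\hat{\delta}^\tb$, contradicting the optimality of $\hat{\delta}^\tb$.

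The main obstacle will be cleanly articulating this ``transport back'' step, since the weights change continuously through the event and $M'$ generally has different bottleneck costs at $\tb$ and $\ta$. The key is that containment of $M'$ in the edge set of $\lheap$ minus the root bounds its bottleneck cost at \emph{either} time by the root weight at \emph{that same} time, because the max-heap invariant holds at both endpoints. Once this is in hand, combining the two steps yields $\hat{\delta}^\ta = \wt^\ta(r)$, and by uniqueness $\hat{e}^\ta = r = \hat{e}^\tb$, as required.
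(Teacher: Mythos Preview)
Your proof is correct and follows essentially the same approach as the paper's: both observe that an internal event leaves the root $r$ and the edge sets of the two heaps unchanged, then verify that $M^\tb$ remains a perfect matching of cost $\wt^\ta(r)$ at $\ta$, and finally argue that any strictly cheaper perfect matching at $\ta$ would live entirely in $\lheap\setminus\{r\}$ and hence contradict the optimality of $\hat\delta^\tb$ at $\tb$. The only cosmetic difference is that the paper phrases the last step via the thresholded subgraph $\Gamma = G_{\wt^\tb(r)} = G_{\wt^\ta(r)}$ and the nonexistence of a perfect matching in $\Gamma\setminus\{r\}$, whereas you argue directly from the heap invariants; the content is the same.
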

\begin{proof}
    Following the previous notation, we have bottleneck distances before and after given by 
    $\hat{\delta}^\tb = \wt^\tb(\hat{e}^\tb)$
    and
    $\hat{\delta}^\ta = \wt^\ta(\hat{e}^\ta)$.
    Because we start with a correct hourglass, we know that 
    $\hat{e}^\tb = r^\tb(\uheap) = r^\tb(\lheap) = r^\tb$.
    By definition, an internal event in either heap is a swap of two elements with a parent-child relationship but for which neither is the root so the roots remain unchanged; 
    that is, $r^\tb = r^\ta$ so we denote it by $r$ for brevity. 
    This additionally means that no edge moves from one heap to the other, so the set of elements in each heap does not change and thus $G_{\wt^\tb(r)}= G_{\wt^\ta(r)} $.
    Again for brevity, we write this subgraph as $\Gamma$.

    We need to show that this edge $r$ is the one giving the bottleneck distance at $\ta$, i.e.~$\hat{\delta}^\ta = \wt^\ta(r)$ or equivalently that $r = \hat e^\ta$. 
    All edges of the perfect matching from $\tb$, $M^\tb$, are contained in $\Gamma $; thus  $M^\tb$ is still a perfect matching at time $\ta$.
    We show that any minimal cost perfect matching for $\ta$ must contain $r$, and thus $M^\ta$ is a \emph{minimal} cost perfect matching. 
    Since $r$ is $\hat{e}^\tb$, by removing $r$, $\Gamma \backslash \{r\}$ will cease to have a perfect matching, else contracting the minimality of the perfect matching at $\tb$.
    But as the order is unchanged, this further means that for time $\ta$, lowering the threshold for the subgraph $\Gamma =  G_{\wt^\ta(r)}$ or equivalently removing the edge $r$ will not have a perfect matching, finishing the proof. 
\end{proof}

The remaining cases to consider are external events of $\uheap$ and $\lheap$, when a certificate in one of the heaps involving the root fails. 
This can be summarized in the three cases below. 
In each case, denote the root at time $\tb$ as $r = r^\tb$.

\begin{enumerate}
    \item ($L$-Event) Swap priority of $r$ and an $e\in L^t$ in $\lheap$; i.e.~$e \preccurlyeq_{\tb} r$ and $ r \preccurlyeq_{\ta} e$.  
    \item ($M$-Event) Swap priority of $r$ and an $e\in M^t$ in $\lheap$; i.e.~$e \preccurlyeq_{\tb} r$ and $ r \preccurlyeq_{\ta} e$. 
    \item ($U$-Event) Swap priority of $r$ and an $e\in U^t$ in $\uheap$; i.e.~$ r \preccurlyeq_{\tb} e$ and $e \preccurlyeq_{\ta} r$.
\end{enumerate}

In the remainder of this section, we consider each of those events and provide the necessary updates.
The simplest update comes from the first event in the list, since we will show that no additional checks are needed. 

\begin{lemma}[$L$-Event]
Assume we swap priority of $r=r^\tb$ and an $e\in L^\tb$ at $t$; i.e.~$e \preccurlyeq_{\tb} r$ and $ r \preccurlyeq_{\ta} e$. 
Then $e$ moves from $\lheap $ to $\uheap$, and $r$ remains the root and is the edge with the bottleneck cost for $\ta$, i.e., $r^\ta = \hat e^\ta$.
\end{lemma}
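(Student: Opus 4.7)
The plan is to prove that $r$ continues to be the bottleneck edge at time $\ta$, i.e.~$r = \hat{e}^\ta$. From this, the assertion that $e$ moves from $\lheap$ to $\uheap$ is immediate, since $\wt^\ta(e) > \wt^\ta(r) = \hat{\delta}^\ta$ places $e$ in $U^\ta$ by definition of the three sets.

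First I would invoke the standard KDS convention that events are processed one at a time, together with continuity of the weight functions, to conclude that for every edge $f \neq e$ the sign of $\wt^\tau(f) - \wt^\tau(r)$ is constant over $\tau \in [\tb, \ta]$, provided $\eps$ is sufficiently small. This yields the set equality
\[
E\bigl(G^\ta_{\wt^\ta(r)}\bigr) \;=\; E\bigl(G^\tb_{\wt^\tb(r)}\bigr) \setminus \{e\},
\]
which simply records that $e$ is the unique edge whose relation to the threshold $\wt(r)$ flips across the event.

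Next I would pin down $\hat{\delta}^\ta = \wt^\ta(r)$ via two inequalities. For the upper bound, the perfect matching $M^\tb$ sits inside $E(\Ghat^\tb)$ and avoids $e$ because $e \in L^\tb$; by the displayed set equality, $M^\tb$ is still a perfect matching inside $G^\ta_{\wt^\ta(r)}$, so $\hat{\delta}^\ta \leq \wt^\ta(r)$. For the lower bound I would suppose for contradiction that some perfect matching $M'$ at $\ta$ achieves cost strictly less than $\wt^\ta(r)$. Then $M'$ avoids both $r$ and $e$, and by the invariance of each edge-versus-$r$ comparison, every $f \in M'$ satisfies $\wt^\tb(f) < \wt^\tb(r)$ as well. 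Thus $M'$ would also be a perfect matching at $\tb$ with cost strictly below $\wt^\tb(r) = \hat{\delta}^\tb$, contradicting $r = \hat{e}^\tb$.

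The main delicate point is the first step: the set equality hinges on the genericity assumption that no comparison other than the scheduled $r$--$e$ swap flips during $[\tb,\ta]$. This is guaranteed by the KDS convention that events are scheduled individually, but it is what makes the rest of the argument go through, so I would flag it explicitly in the write-up rather than leave it implicit.
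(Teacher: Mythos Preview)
Your proof is correct and follows essentially the same route as the paper: both arguments identify that $G^\ta_{\wt^\ta(r)} = G^\tb_{\wt^\tb(r)} \setminus \{e\}$, observe that $M^\tb$ persists as a perfect matching there because $e\notin M^\tb$, and derive a contradiction from any strictly cheaper matching by pulling it back to time $\tb$. Your write-up is in fact slightly more explicit about the genericity assumption (only the $r$--$e$ comparison flips) than the paper's version.
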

\begin{proof}
Denote $M=M^\tb$. In this case, $e$ is an edge in the lower heap but $e \not \in  M$. 
Because $r \preccurlyeq_\ta e$, in order to maintain the heap certificates, either $e$ needs to be inserted into the upper heap with $r$ remaining as the root; or if $e$ remains in the lower heap, it needs to become the new root. 
Note that although they swap orders, the graph thresholded at the cost of $r$ at $\tb$ and the graph thresholded at the cost of $e$ at $\ta$ are the same; i.e.~$G_{c^\tb(r)} = G_{c^\ta(e)}$. 
In addition, $G_{c^\ta(r)} = G_{c^\ta(e)} \setminus \{e\}$.
However, since $e\notin M$, $M$ is still a perfect matching in $G_{c^+(r)}$.
If there exists a perfect matching in $G_{c^\ta(r)} \setminus \{r\}$, this would constitute a perfect matching at time $\tb$ which has lower cost than $M$, contradicting the assumption that $M$ is a minimal cost matching for that time. 
Thus $M$ is a minimal cost matching for time $\ta$. 
\end{proof}

\begin{figure}
    \centering
    \includegraphics[width=0.7\linewidth]{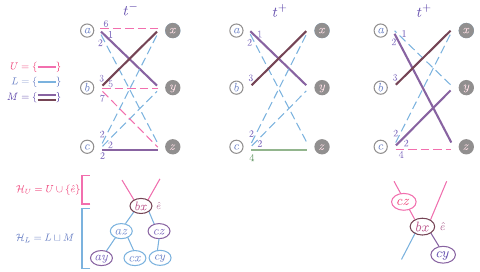}
    \caption{Illustration of Scenario 1 of and $M$-Event; see Lemma \ref{lem:M}.}
    \label{fig:Event2-1}
    \includegraphics[width=0.7\linewidth]{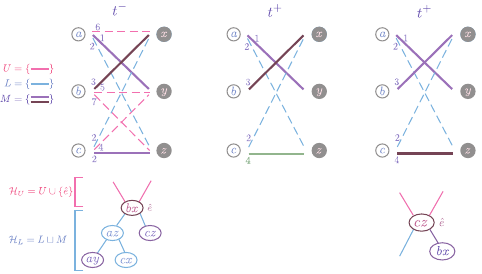}
    \caption{Illustration of Scenario 2 of $M$-Event; see Lemma \ref{lem:M}.}
    \label{fig:Event2}
\end{figure}

\begin{lemma}[$M$-Event]
\label{lem:M}
Assume we swap priority of $r=r^\tb$ and an $e\in M^\tb$ at $t$; i.e.~$e \preccurlyeq_{\ta} r$ and $ r \preccurlyeq_{\tb} e$. 
Let $G' = \hat{G}^\tb\setminus \{e\}$ be the graph with $e = (u,v)$ removed.
Exactly one of the following scenarios happens. 
(See Figure \ref{fig:Event2-1} and \ref{fig:Event2} for examples of the two scenarios.)
\begin{enumerate}
    \item There exists an augmenting path $P$ in $G'$ from $u$ to $v$. 
    Then $e$ moves into the upper heap ($e \in U^{\ta}$), 
    the root remains the same ($\hat{e}^\ta = r^{\ta} =r$), and the matching is updated with the augmenting path, specifically 
    $
    M^\ta = \Aug(M^\tb, P)
    $.
    \item There is no such augmenting path. Then $M^{\ta} = M^{\tb}$ and  $\hat{e}^\ta = r^{\ta} = e$; i.e.~the only update is that $r$ and $e$ switch places in the lower heap.
\end{enumerate}
\end{lemma}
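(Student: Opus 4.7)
The plan is to dichotomize based on the existence of an augmenting path in $G' = \hat{G}^\tb \setminus \{e\}$, and in each case identify the new bottleneck edge and matching. The key starting observation is that at time $\ta$, the weight order of all edges is the same as at $\tb$ except that $e$ and $r$ have swapped; consequently, the set of edges with weight $\le \wt^\ta(r)$ is exactly $E(G')$, while no edge has weight in the open interval $(\wt^\ta(r), \wt^\ta(e))$, since only one swap-through-$r$ event is being processed at this instant.

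Next I would restrict to the matching $M' = M^\tb \setminus \{e\}$ in $G'$, which has size $n-1$ with exactly the endpoints $u,v$ of $e$ left unmatched. By Berge's Theorem (Theorem~\ref{thm:Berge}), $G'$ admits a perfect matching if and only if there is an $M'$-augmenting path; since $u$ and $v$ are the only unmatched vertices, such a path must connect them. This is precisely the dichotomy that partitions the lemma into its two scenarios, and it guarantees that they are exhaustive and mutually exclusive.

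For Scenario 1, given an augmenting path $P$ from $u$ to $v$ in $G'$, the operation $\Aug(M',P)$ produces a perfect matching in $G'$ of cost at most $\wt^\ta(r)$. To see this is optimal, I would invoke the hypothesis that $r$ was the bottleneck edge at $\tb$: by optimality of $M^\tb$, the graph $\hat{G}^\tb \setminus \{r\}$ had no perfect matching, and since $G' \setminus \{r\} \subseteq \hat{G}^\tb \setminus \{r\}$, neither does $G' \setminus \{r\}$. Thus $\hat\delta^\ta = \wt^\ta(r)$, $r$ remains the root of both heaps, and $e$ (now heavier than $r$) migrates to $\uheap$.

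For Scenario 2, if no such augmenting path exists then, again by Berge's Theorem, $M'$ is already a maximum matching in $G'$, so $G'$ admits no perfect matching. Therefore any perfect matching at time $\ta$ must contain an edge of weight strictly greater than $\wt^\ta(r)$; but by the first paragraph the only such edge of weight $\le \wt^\ta(e)$ is $e$ itself, so the minimum bottleneck cost is exactly $\wt^\ta(e)$, achieved by keeping $M^\ta = M^\tb$ with $e$ replacing $r$ as the root of $\lheap$. The main obstacle I anticipate is the optimality argument in Scenario 1 and the interval argument in Scenario 2, both of which rely on subtly exploiting that only this single swap is being processed at the event time; once these are handled cleanly, the rest reduces to a straightforward application of Berge's Theorem together with the inherited optimality of $M^\tb$.
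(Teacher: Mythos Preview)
Your proposal is correct and follows essentially the same approach as the paper: set $M' = M^\tb \setminus \{e\}$, invoke Berge's Theorem to split on the existence of an $M'$-augmenting path in $G'$, and use optimality of $M^\tb$ to pin down the new bottleneck edge in each case. The only cosmetic difference is in Scenario~1, where the paper argues by contradiction that $r \notin P$ (hence $r \in M''$), while you instead observe that $G' \setminus \{r\} \subseteq \hat{G}^\tb \setminus \{r\}$ has no perfect matching; these are two phrasings of the same optimality fact.
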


\begin{proof}

    Let $M' = M^\tb\setminus \{e\}$. 
    Then $M'$ is a matching of size $n-1$ in $G'$ (hence it is not perfect), and the unmatched vertices are $u$ and $v$.

    If there exists an augmenting path $P$ from $u$ to $v$, augment $M'$ by replacing  $M'\cap P$ with $P\setminus M'$ to make a new matching $M''$.
    This increases $\abs{M'}$ by $1$ and thus $M''$ is a perfect matching. 
    Note that if $r \in P$, then $M''$ is also a perfect matching in $\tb$ with strictly cheaper cost than $M^{\tb}$; but this contradicts the assumption of $r$ giving the bottleneck distance. 
    Thus we can assume that $r \in M''$, and all edges in the lower heap have cost at most that of $r$.
    This means that $r$ is still the bottleneck edge of the matching, i.e.~$\wt^\ta(r) = \max\{\wt^\ta(e)\mid e\in M''\}$.
   
    Assume instead that there is no augmenting path in $G'$ from $u$ to $v$. 
    Then $M'$ is a maximum matching for $G'$ by Theorem \ref{thm:Berge}, however it is not perfect. 
    Note that because $G' = \hat{G}^\tb\setminus e$ and $e$ and $r$ have swapped places, we have that $G' = G_{\wt^\ta(r)}$.    
    Therefore, there is no perfect matching for $G_{\wt^\ta(r)}$.
    However, $M^\tb$ is a perfect matching at time $\tb$ and $\hat{G}^\tb = G_{\wt^\ta(e)}$, it still is a perfect matching at $\ta$ for $G_{\wt^\ta(e)}$.
    Moreover, $\wt^\ta(e) = \max\{\wt^\ta(e)\mid e\in M^\ta\}$.
    Therefore, $e = \hat{e}^\ta = r^\ta$ becomes the root, and $r$ becomes a child of the root $e$ in $\lheap$.
\end{proof}

\begin{lemma}[$U$-Event]
\label{lem:U}
Assume we swap priority of $r=r^\tb$ and an $e\in U^\tb$ at $t$; i.e.~$ r \preccurlyeq_{\tb} e$ and $e \preccurlyeq_{\ta} r$.
Let $G' = \hat{G}^\tb \cup \{ e\}\setminus \{r\}$ be the graph with $r = (u,v)$ removed and $e$ included.
Exactly one of the following events happens. 
\begin{enumerate}
    \item There exists an augmenting path $P$ from $u$ to $v$. 
    Then $r$ moves into the upper heap ($r \in U^\ta$), 
    $e$ becomes the root ($\hat{e}^\ta = r^{\ta} = e$), 
    and the matching is updated with the augmenting path, specifically
    $
    M^\ta = \Aug(M^\tb, P)
    $.
    \item There is no such augmenting path. Then $M^{\ta} = M^{\tb}$ and $e$ moves into the lower heap ($e \in L^{\ta}$), 
    and the root remains the same ($\hat{e}^\ta = r^{\ta} =r$).
\end{enumerate}
\end{lemma}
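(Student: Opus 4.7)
The plan is to parallel the proof of Lemma \ref{lem:M}, using the key observation that the modified subgraph $G'$ coincides with the threshold graph $G_{\wt^\ta(e)}^\ta$. Because the event at time $t$ swaps only $r$ and $e$, the relative order of every other pair of edges is preserved, so
\[
G_{\wt^\ta(e)}^\ta \;=\; (\hat{G}^\tb \setminus \{r\}) \cup \{e\} \;=\; G' \quad \text{and} \quad G_{\wt^\ta(r)}^\ta \;=\; \hat{G}^\tb \cup \{e\}.
\]
Set $M' = M^\tb \setminus \{r\}$, which is a matching of size $n-1$ in $G'$ whose only unmatched vertices are $u$ and $v$.

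\textbf{Case 1.} Suppose $P$ is an $M'$-augmenting path in $G'$ from $u$ to $v$. Augmenting yields a perfect matching $M'' = \Aug(M^\tb, P)$ contained in $G'$. I first argue $e \in M''$: otherwise $M''$ would be a perfect matching in $\hat{G}^\tb \setminus \{r\}$, and evaluated at time $\tb$ this is a perfect matching whose bottleneck is strictly below $\wt^\tb(r)$ (using uniqueness of the bottleneck edge), contradicting optimality of $M^\tb$. Next, by continuity and genericity, every other edge $f \in M''$ lies in $\hat{G}^\tb \setminus \{r\}$ and satisfies $\wt^\ta(f) < \wt^\ta(e)$, so $e$ is the bottleneck edge of $M''$ at $\ta$. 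Optimality of $M''$ at time $\ta$ follows by an analogous contradiction: any perfect matching with bottleneck strictly below $\wt^\ta(e)$ would be contained in $\hat{G}^\tb \setminus \{r\}$, and pulling back to $\tb$ would beat $M^\tb$. Thus $e$ becomes the new root, $r$ is demoted to $U^\ta$, and $M^\ta = \Aug(M^\tb, P)$.

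\textbf{Case 2.} Suppose no such augmenting path exists. Then by Berge's theorem (Theorem \ref{thm:Berge}), $M'$ is a maximum matching of $G'$; since $|M'| = n-1$, the graph $G' = G_{\wt^\ta(e)}^\ta$ admits no perfect matching, so $\hat{\delta}^\ta > \wt^\ta(e)$. On the other hand, $M^\tb \subseteq \hat{G}^\tb \subseteq G_{\wt^\ta(r)}^\ta$ and $M^\tb$ is still perfect, so $\hat{\delta}^\ta = \wt^\ta(r)$ and $M^\ta = M^\tb$. The edge $e$ now satisfies $\wt^\ta(e) < \hat{\delta}^\ta$ and so moves into $L^\ta$, while $r$ stays at the root.

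The main obstacle I expect is the careful handling of strict inequalities at the event time: the identifications $G_{\wt^\ta(e)}^\ta = G'$ and $G_{\wt^\ta(r)}^\ta = \hat{G}^\tb \cup \{e\}$ depend on the genericity assumption that no third edge has weight equal to $\wt(r)$ or $\wt(e)$ at time $t$, together with uniqueness of the bottleneck edge at $\tb$. I would make these assumptions explicit at the outset, since they are used implicitly throughout the section.
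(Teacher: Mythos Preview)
Your proof is correct and follows essentially the same approach as the paper's: define $M' = M^\tb \setminus \{r\}$, split on the existence of an $M'$-augmenting path in $G'$, and identify $G'$ with $G_{\wt^\ta(e)}$ via the observation that only $r$ and $e$ exchange positions. If anything, your Case~1 is more careful than the paper's, which asserts that $e$ is the new bottleneck edge without explicitly verifying $e \in M''$ or the optimality of $M''$; your contradiction arguments fill exactly those gaps.
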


\begin{proof}
    Let $M' = M^\tb\setminus \{r\}$, and again $M'$ is a matching of size $n-1$ in $G'$ (hence it is not perfect), and the unmatched vertices are $u$ and $v$.

    Similar to the $M$-Event, if there exists an augmenting path $P$ from $u$ to $v$, augment $M'$ by replacing  $M'\cap P$ with $P\setminus M'$ to make a new matching $M''$.
    This increases $\abs{M'}$ by $1$; thus, $M''$ is a perfect matching. 
    Further, because $G' = \hat{G}^\tb \cup \{ e\}\setminus \{r\}$ and $e$ and $r$ have swapped places, we have that $G' = G_{\wt^\ta(e)}$.
    Moreover, $\wt^\ta(e) = \max\{\wt^\ta(e)\mid e\in M''\}$.
    Therefore $e = \hat{e}^\ta = r^\ta$, and $r$ gets moved down to $\uheap$ and becomes a child of $e$.

    Assume instead that there is no augmenting path in $G'$ from $u$ to $v$. 
    Then $M'$ is a maximum matching for $G'$ by Theorem \ref{thm:Berge}, however it is not perfect. 
    Therefore, there is no perfect matching for $G_{\wt^\ta(e)}$.
    However, $M^\tb$ is a perfect matching at time $\tb$ and $\hat{G}^\tb = G_{\wt^\ta(r)}\setminus\{e\}$, it still is a perfect matching at $\ta$ for $G_{\wt^\ta(r)}$.
    This is thus an internal event; we move $e$ to $\lheap$ as a child of $r$, and $r$ remains the bottleneck edge, i.e.~$\wt^\ta(r) = \max\{\wt^\ta(e)\mid e\in M''\}$.
\end{proof}

\subsection{Complexity and Performance Evaluation}
\label{sec:hanger}
The kinetic hourglass's complexity analysis largely follows that of the kinetic heap and kinetic hanger \cite{daFonseca2004Hanger}.
Recall the time complexities of the structures summarized in Table \ref{tbl:complexity} are obtained by multiplying the number of events by the time to process each event, $O(\log n)$.
This is the key difference between those data structures and the kinetic hourglass.

In the kinetic hourglass structure, we maintain two heaps or hangers at the same time.
For a bipartite graph $G$ of size $m$, we maintain the $m$ edges in the kinetic hourglass.
It is also the worst-case maximum number of elements in the max-heap $\lheap$ and the min-heap $\uheap$.
We therefore use $m = n$ in our analysis.
Within a single heap, we can think of this procedure as focusing on the cost of an edge at time $t$ as $c^t(e)$ only given for its time inside the heap. 
This means that when we evaluate the runtime for the heap (or hanger) structures, we can think of the function on each edge as being a curve segment, rather than defined for all time. 

The time complexities for the kinetic hourglass are summarized in Table \ref{tbl:hourglass} and we describe them further here.
The main change in time complexity results from the augmenting path search required at every external event. 
Note that in both the $L$- and $M$-events from Lemma \ref{lem:M} and \ref{lem:U}, exactly only one iteration of augmenting path search is required, so this takes $O(\abs{E}) = O(m)$ per iteration.
Therefore, we replace a $\log n$ factor with $m$.
When $\wt(e)$ is linear, the complexity of the kinetic hourglass is $O(m^2\sqrt{m}\log^{3/2}m)$ using heaps, and is $O(m^2\alpha(m)\log m)$ using hangers, since the arrangement of $m$ lines has complexity $O(m^2)$. 
If the weight function is non-linear, we are in the $s$-intersecting curve segments scenario.
The resulting runtimes are $O(m^3\log m)$ and $O(m\lambda_{s+2}(m)\log m)$, using heaps and hangers respectively.
The constant $s$ is determined by the behavior of the weight function. 

While this data structure is responsive, local, and compact, following directly from the analysis of kinetic heap and hanger, it fails to be efficient due to the linear time complexity required by each external event of the heap.
We conjecture that this can be improved via amortization analysis, which involves investigating the ratio between the number of internal and external events.
However, this is a non-trivial problem and has yet to be addressed in existing literature \cite{daFonseca2004Hanger}.

\begin{table}
    \centering
    \begin{tabular}{c|c|c}
        \textbf{Scenario} & \textbf{Kinetic Heap Hourglass} & \textbf{Kinetic Hanger Hourglass} \\
        \hline
         Line segments & $O(m^{5/2}\log^{1/2}m)$ & $O(m^2\alpha(m)\log m)$\\
        \hline
        $s$-intersecting curve segments  & $O(m^3\log m)$ & $O(m\lambda_{s+2}(m)\log m)$
    \end{tabular}
    \caption{Deterministic complexity of the kinetic heap hourglass and expected complexity of the kinetic hanger hourglass.}
    \label{tbl:hourglass}
\end{table}

\section{Kinetic Hourglass for Persistent Homology Transform}
\label{Sec:PHT}

In the following section, we apply our kinetic hourglass to compute the bottleneck distance between two persistent homology transforms \cite{TurnerPHT2014, CurryPHT2018, GhristPHT2018}, a specific case of persistence vineyard \cite{Morozov2006Vineyard}.
We first discuss the general PHT, then show how the vineyard structure enables the application of our kinetic hourglass framework.
In this section, we assume a basic knowledge of persistent homology; see \cite{DeyWang2017} for additional details.

\subsection{The Persistent Homology Transform}
Given a finite geometric simplicial complex $K$ in $\R^2$ with $|K|\subset \R^2$ denoting the geometric realization in the plane, every unit vector $\omega\in\S^1$ corresponds to a height function in direction $\omega$ defined as
\begin{equation*}
\begin{matrix}
    h_\omega:& |K| &\rightarrow&\R\\
    & x &\mapsto&\brak{x,v}
\end{matrix}
\end{equation*}
where $\brak{\cdot, \cdot}$ denotes the inner product. 
This induces a filtration $\{ h_\omega^{-1}(-\infty, a] \mid a \in \R \}$ of $|K|$, however we instead use the the combinatorial  representation given on the abstract complex $K$. 
The \emph{lower-star filtration} \cite{EdelsHarer2010} is defined on the abstract simplicial complex by 
\begin{equation*}
\begin{matrix}
    h_\omega:& K &\rightarrow&\R\\
    & \sigma &\mapsto&\max\{ h_v(u) \mid u \in \sigma\}. 
\end{matrix}
\end{equation*}
We abuse notation and write both functions as $h_\omega$ since the sublevelset $ h_\omega^{-1}(-\infty,a]$ has the same homotopy type in both cases, thus does not affect the persistent homology calculations. 
Note also that the combinatorial viewpoint means we can also define $h_\omega^{-1}(-\infty,a]$ as the full subcomplex of $K$ given by the vertex set $V_a = \{ v \in V \mid h_\omega(v) \leq a\}$. 

We say $K$ is generic if there are no two parallel and distinct lines each passing through pairs of vertices in $K$.
Then for a generic $K$, all vertices have distinct function values for $h_\omega$ except for $\omega$s that are perpendicular to any line connecting a pair of vertices, whether or not this is an edge.
Fix $\omega$ away from this set. 
Then the function induces a sorted order of the vertices by function value, and we denote the full subcomplex defined by the first $i$ vertices by in the order given by $h_\omega$.
Finally, the nested sequence of complexes 
\begin{equation*}
    \emptyset = K(\omega)_0\subseteq K(\omega)_1 \subseteq\dots\subseteq K(\omega)_n = K
\end{equation*} 
is the lower star filtration.
The resulting $k$-dimensional persistence diagram computed by filtering $K$ by the sub-level sets of $h_\omega$ is denoted $\Dgm_k(h^K_\omega)$.

In the context of this work, we are focusing on $0$-dimensional diagrams, denoted as $\Dgm(h^K_\omega)$ for simplicity.
Dimension 0 homology is entirely determined by the 0- and 1-dimensional simplices, so we can assume our input $K$ is an embedded graph. 
We assume that the input embedded graph is connected, so each $\Dgm(h^K_\omega)$ has exactly one feature that dies at time $t = \infty$.

\begin{definition}
\label{def:pht}
    The \emph{persistent homology transform} of $K\subset\R^2$ is the function 
    \begin{equation*}
    \begin{matrix}
        \PHT(K):&\S^1&\rightarrow &\mathcal{D}\\
        &\omega&\mapsto& \Dgm(h_\omega^K)
    \end{matrix}
    \end{equation*}
    where $h_\omega^K: K\rightarrow\R$, $h_\omega^K(x) = \brak{x, \omega}$ is the height function on $K$ in direction $\omega$, and $\mathcal{D}$ is the space of persistence diagrams.
\end{definition}

The PHT has been shown to be injective: for $K_1$, $K_2\subset\R^d$, if $\PHT(K_1) = \PHT(K_2)$, then $K_1 = K_2$ \cite{TurnerPHT2014, CurryPHT2018, GhristPHT2018}.  
This makes it a useful representation of shape in data analysis contexts. 
However, in those contexts, we wish to be able to compare two of these representations to encode their similarity. 
To this end, we are interested in comparing the  persistent homology transforms of two input complexes, $K_1$ and $K_2$, via an extension of the bottleneck distance using the framework of the kinetic hourglass.
\begin{definition}
\label{def:phtBottleneck}
    The \textit{integrated bottleneck distance} between $\PHT(K_1)$ and $\PHT(K_2)$ is defined as
    \[
    d_B(\PHT(K_1), \PHT(K_2)) = \int_0^{2\pi}d_B\left(\Dgm(h_\omega^{K_1}), \Dgm(h_\omega^{K_2})\right)d\omega.
    \]
\end{definition}

For a fixed direction $\omega$, the bottleneck distance between the two persistence diagrams can be formulated as a geometric matching problem as described in Section \ref{sec:BackgroundPersistence}.
Moreover, it is stable in the following sense. 
Assume that we have two geometric simplicial complexes $K_1$ and $K_2$ with the same underlying abstract complex. 
Then  we can think of the geometric embeddings as functions $f_1,f_2: K \to \R^2$ and so for a fixed $\omega \in \S^1$, we write $h_{i,\omega}:K \to \R$ for the respective height functions. 
We can use the fact that we have a lower star filtration along with the Cauchy-Schwartz inequality to see that 
$
\|h_{1,\omega} - h_{2,\omega}\|_\infty \leq \max_{v \in V} \|f_1(v) - f_2(v)\|_\infty
$. 
Then by the stability theorem \cite{CohenSteiner2007Stability}, the distance between the diagrams is bounded by 
\begin{equation*}
d_B(\Dgm(h_\omega^{K_1}), \Dgm(h_\omega ^{K_2})) \leq \max_{v \in V} \|f_1(v) - f_2(v)\|_\infty
\end{equation*}
for all $\omega\in\S^1$.
Integrating this inequality immediately gives the following result for the integrated bottleneck distance. 
\begin{proposition}
For two finite, connected geometric simplicial complexes with the same underlying abstract complex,        
\[
d_B(\PHT(K_1), \PHT(K_2))
\leq 2\pi \max_{v \in V} \|f_1(v) - f_2(v)\|_\infty.
\]
\end{proposition}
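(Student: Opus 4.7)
The plan is to simply integrate the pointwise stability bound that has already been derived in the paragraph immediately preceding the proposition. All of the nontrivial work (Cauchy--Schwartz to bound $\|h_{1,\omega}-h_{2,\omega}\|_\infty$ by $\max_{v\in V}\|f_1(v)-f_2(v)\|_\infty$, and the stability theorem of Cohen-Steiner et al.\ to pass from function distance to bottleneck distance of diagrams) is already in place.

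First, I would recall the pointwise bound
\[
d_B\bigl(\Dgm(h_\omega^{K_1}),\Dgm(h_\omega^{K_2})\bigr)\;\leq\;\max_{v\in V}\|f_1(v)-f_2(v)\|_\infty,
\]
valid for every $\omega\in\S^1$. The key observation is that the right-hand side does not depend on $\omega$; it is a constant with respect to the variable of integration. Next I would apply the definition of the integrated bottleneck distance (Definition~\ref{def:phtBottleneck}) and use monotonicity of the integral:
\[
d_B(\PHT(K_1),\PHT(K_2))
=\int_0^{2\pi}\!d_B\bigl(\Dgm(h_\omega^{K_1}),\Dgm(h_\omega^{K_2})\bigr)\,d\omega
\leq\int_0^{2\pi}\!\max_{v\in V}\|f_1(v)-f_2(v)\|_\infty\,d\omega.
\]
Pulling the constant out of the integral yields the $2\pi$ factor and gives the claimed inequality.

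A small technical point worth mentioning is measurability of $\omega\mapsto d_B(\Dgm(h_\omega^{K_1}),\Dgm(h_\omega^{K_2}))$, so that the integral is well-defined. This follows from continuity of the PHT in $\omega$ (which is standard for finite simplicial complexes with generic embeddings, and extends to the non-generic directions by taking limits, as the non-generic set is finite hence measure zero). Once this is noted, the proof is essentially one line and there is no real obstacle; the substance of the proposition lies entirely in the stability inequality quoted from \cite{CohenSteiner2007Stability} and the Cauchy--Schwartz step already performed in the text.
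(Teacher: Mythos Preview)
Your proposal is correct and matches the paper's approach exactly: the paper simply states that integrating the pointwise stability bound over $\S^1$ immediately yields the proposition, and you have spelled out precisely that one-line argument (with the extra care of noting measurability, which the paper omits).
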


A choice of total order on the simplices of $K$ induces a pairing on the simplices of $K$, where for a pair $(\tau, \sigma)$, $\dim(\sigma) = \dim(\tau)+1$, and a $\dim(\tau)$-dimensional homology class was born with the inclusion of $\tau$ and dies with the inclusion of $\sigma$. 
Because the lower star filtration, the function values of these simplices are given by the unique maximum vertex $v_\sigma = \{ v \in \sigma \mid f_\omega(v) \geq f_\omega(u) \; \forall u \in \sigma)\}$ and so the function value when this simplex was added is given by $f_\omega(v_\sigma)$, and $v_\sigma$ does not change for nearby $\sigma$ that do not pass through one of the non-generic directions. 
For this reason, we can associate each finite point $(b,d)$ in the persistence diagram $\Dgm_k(h^K_\omega)$ to a vertex pair $(v_b, v_d)$ with $f_\omega(v_b) = b$ and $f_\omega(v_d) = d$. 
This pair is well-defined in the sense that given a point in the diagram, there is exactly one such pair. 
Further, because of the pairing on the full set of simplicies, every birth vertex appears in exactly one such pair, although here a death vertex could be associated to multiple points in the diagram. 
For any infinite points in the diagram of the form $(b,\infty)$, we likewise have a unique vertex $v_b$ again with $f_\omega(v_b) = b$.

Write a given filtration direction as $\omega = (\cos(\theta), \sin(\theta)) \in\S^1$ and fix  a point $x \in \Dgm(h_\omega^K)$ with birth vertex $v_a \in K$  located at coordinates  $(a_1, a_2) \in \R^2$ and death vertex $v_b \in K$ at coordinates $(b_1, b_2) \in \R^2$. 
This means the corresponding point in $x \in \Dgm(h_\omega^{K})$ has coordinates which we denote as 
\begin{equation*}
x(\omega) = (a_1\cos \theta + a_2 \sin v, b_1\cos v+ b_2\sin \theta).
\end{equation*}
Notice that for some interval of directions $I_x \subset \S^1$, the vertices $v_a$ and $v_b$ will remain paired, and so for that region, we will have the point $x(\omega')$ in the diagram for all directions $\omega' \in I_x$. 
Viewing this as a function of angle $\theta$, we observe that $x(\omega)$ is a parametrization of an ellipse,  meaning that the point $x(\omega)$ in the diagram will trace out a portion of an ellipse in the persistence diagram plane.
Further, the projection of this point to the diagonal has the form 
\begin{equation*}
x'(\omega) = \left( 
\tfrac{a_1 +b_1}{2} \cos \theta + \tfrac{a_2+b_2}{2} \sin \theta, 
\tfrac{a_1 +b_1}{2} \cos \theta + \tfrac{a_2+b_2}{2} \sin \theta 
\right),
\end{equation*}
which is also a parameterization of an ellipse, albeit a degenerate one.

\subsection{Kinetic Hourglass for PHT}
\label{sec:KDS-PHT}

In this section, we show how the kinetic hourglass data structure investigated in Sec.~\ref{Sec:KDS} can be applied to compute the exact distance between the 0-dimensional PHTs of two geometric simplicial complexes in $\R^2$, and provide an exact runtime analysis of the kinetic hourglass by explicitly determining the number of curve crossings in the flight plans for the edges in the bipartite graph.

\begin{remark}
While \cite{Arya2024} established the vine structure for star-shaped 
complexes with trivial monodromy, the kinetic hourglass algorithm 
requires only the \emph{local} vine structure, which is well-defined 
for any generic simplicial complex. The global monodromy affects 
whether vines return to themselves after a full rotation of $\theta$, 
but the algorithm processes events locally and computes the correct 
bottleneck distance at each direction.
\end{remark}

For the purposes of this section, we work with a simplicial complex $K$ that is generic, meaning there are no two parallel and distinct lines each passing through pairs of vertices in $K$.
For such a generic $K$, all vertices have distinct function values for $h_\omega$ except for $\omega$s that are perpendicular to any line connecting a pair of vertices.

For a fixed direction $\omega$ away from these non-generic directions, we can associate each finite point $(b,d)$ in the persistence diagram $\Dgm_k(h^K_\omega)$ to a vertex pair $(v_b, v_d)$ with $f_\omega(v_b) = b$ and $f_\omega(v_d) = d$. 
As the direction $\omega$ varies over $\S^1$, these points trace out continuous paths in the persistence diagram plane.
We call these paths \emph{vines}, following the vineyard terminology of \cite{Morozov2006Vineyard}.

Specifically, write a given filtration direction as $\omega = (\cos(\theta), \sin(\theta)) \in\S^1$ and fix a point $x \in \Dgm(h_\omega^K)$ with birth vertex $v_a \in K$ located at coordinates $(a_1, a_2) \in \R^2$ and death vertex $v_b \in K$ at coordinates $(b_1, b_2) \in \R^2$. 
The corresponding point in $x \in \Dgm(h_\omega^{K})$ has coordinates 
\begin{equation*}
x(\omega) = (a_1\cos \theta + a_2 \sin \theta, b_1\cos \theta + b_2\sin \theta).
\end{equation*}
For some interval of directions $I_x \subset \S^1$, the vertices $v_a$ and $v_b$ will remain paired, and so for that region, we will have the point $x(\omega')$ in the diagram for all directions $\omega' \in I_x$. 
Viewing this as a function of angle $\theta$, we observe that $x(\omega)$ traces out a portion of an ellipse in the persistence diagram plane.
Similarly, the projection of this point to the diagonal traces out a (degenerate) ellipse.

We call a vertex $v$ in a geometric simplicial complex $K$ an \emph{extremal vertex} if it is not in the convex hull of its neighbors, following \cite{Wang2024}. 
Write $\ext(V)$ for the set of vertices of $K$ which are extremal. 
By Lemma 2.5 of \cite{Wang2024}, a vertex in $K$ gives birth to a 0-dimensional class for some direction $\omega$ if and only if it is an extremal vertex. 
Thus, we can bound the number of vines $N$ by the number of extremal vertices: $N \leq |\ext(V)|$.

For an extremal vertex $v$, we can find the interval of directions $I_v = [\omega_1, \omega_2] \subset \S^1$ for which it gives birth.
The directions $\omega_1$ and $\omega_2$ are the normal vectors of the edges that form the largest angle incident to $v$.
We can write a piecewise-continuous path associated to vertex $v$ as $\mu_v: I_v \to \R^2; \, \omega \mapsto x(\omega)$ for the point $x(\omega)$ in the diagram with birth vertex $v$, where each piece is a portion of an ellipse.

\begin{lemma}
    For an extremal vertex $v$, if $\mu_v (\omega) \in \Delta$ then $\omega = \omega_1$ or $\omega = \omega_2$. 
\end{lemma}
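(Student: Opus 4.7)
The approach I would take is a contrapositive argument: assume $\omega$ lies in the open interval $(\omega_1,\omega_2)$ and show $\mu_v(\omega)\notin\Delta$. Writing $\mu_v(\omega)=(h_\omega(v),h_\omega(v_d))$ where $v_d$ is the death vertex paired with the birth vertex $v$, the point is on the diagonal precisely when $h_\omega(v_d)=h_\omega(v)$, so it suffices to establish the strict inequality $h_\omega(v_d)>h_\omega(v)$ throughout the interior of $I_v$.

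The first step is to extract from the description of $I_v$ preceding the lemma the fact that for every $\omega$ in the open interval $(\omega_1,\omega_2)$, every neighbor $u$ of $v$ in $K$ satisfies $h_\omega(u)>h_\omega(v)$; the boundary directions $\omega_1,\omega_2$ are precisely those at which this strict inequality first collapses to an equality for one of the two external edges at $v$. The second step is a case split on the death edge $e$ of the feature born at $v$. If $e=(v,u)$ is incident to $v$, then $v_d=u$ is a neighbor of $v$ and the strict inequality just obtained gives $h_\omega(v_d)>h_\omega(v)$ immediately. If instead $e=(u,u')$ with $v\notin\{u,u'\}$, I would use the lower-star filtration description: for $e$ to be the death edge of the feature born at $v$, the connected component of $v$ in the sub-level complex at value $h_\omega(e)$ must contain one of the endpoints of $e$. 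Since $v$ has no neighbor of height at most $h_\omega(v)$, its component is the singleton $\{v\}$ at value $h_\omega(v)$, so any path in $K$ from $v$ to an endpoint of $e$ through vertices of height at most $h_\omega(e)$ must leave $v$ through some neighbor $u_1$. Strict local minimality then yields $h_\omega(v_d)=h_\omega(e)\geq h_\omega(u_1)>h_\omega(v)$.

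The main obstacle I anticipate is handling non-generic $\omega$ in the interior of $I_v$ at which two vertices of $K$ have coincident $h_\omega$-values, since a tie-breaking rule is then needed to turn the lower-star filtration into a totally ordered one. I would argue that the reasoning above only uses strict inequalities between $h_\omega(v)$ and the heights of its neighbors in $K$, so the conclusion persists as long as $v$ does not tie with one of its neighbors. The generic position assumption on $K$ allows at most one pair of $K$'s vertices to tie at any fixed $\omega$; if that tying pair were $v$ and a neighbor, the description of $I_v$ would force $\omega\in\{\omega_1,\omega_2\}$, contradicting our assumption, and in the remaining situations the component-growth argument proceeds unchanged.
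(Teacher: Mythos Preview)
Your argument is correct. The route differs from the paper's in two ways. First, the paper argues directly (assume $\mu_v(\omega)\in\Delta$, deduce $\omega\in\{\omega_1,\omega_2\}$), whereas you argue the contrapositive. Second, and more substantively, the paper's proof treats the death-causing vertex $v'$ as a neighbor of $v$: from $h_\omega(v)=h_\omega(v')$ it infers that $\omega$ is normal to the segment $vv'$, that all neighbors of $v$ lie in the closed half-plane $\{h_\omega\geq h_\omega(v)\}$, and hence that $(v,v')$ is one of the two external edges of $v$. Your case split is what makes this inference sound in general: in Case~1 the death edge is incident to $v$ and the paper's picture applies verbatim; in Case~2 your path argument shows that even when the killing edge is remote from $v$, the death value is still bounded below by the height of the first neighbor on a path leaving $v$, which strictly exceeds $h_\omega(v)$ in the interior of $I_v$. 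So your version buys a cleaner justification of the step the paper leaves implicit, at the cost of a slightly longer argument; the paper's version is shorter but leans on the (unstated) fact that a zero-lifetime $0$-feature at a local minimum forces a neighbor to share its height. Your discussion of non-generic interior directions is fine and in fact only needs the observation that strict local minimality of $v$ persists throughout $(\omega_1,\omega_2)$, which is exactly how $I_v$ is characterized.
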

\begin{proof}
    Let $\omega_1$, $\omega_2$ be the normal vectors of the external edges $(v, u_1)$, $(v, u_2)$ respectively, and
    $\mu_v(\omega) = (h_\omega(v), h_\omega(v'))$ where $v$ is the birth-causing vertex and $v'$ is the death causing vertex.
    Since $v$ is birth-causing, 
    all other neighbors of $v$ have higher height values along direction $\omega$: $\{h_\omega(u) > h_\omega(v) \mid u\in N_K(v), u\neq v'\}$.
    Because $\mu_v(\omega) \in \Delta$, then $h_\omega(v)=h_\omega(v')$. 
    Namely $\brak{v, \omega} = \brak{v', \omega}$.
    This implies that $\omega$ is a normal vector to the edge defined by $v$ and $v'$.
    Furthermore, no neighbor of $v$ lies in the half plane defined by $(v, v')$ and $-\omega$.
    Therefore, $(v, v')$ must be part of the convex hull formed by $v$ and $N_K(v)$.
    We thus conclude that $v'$ is either $u_1$ or $u_2$, and $\omega = \omega_1$ or $\omega = \omega_2$.
\end{proof}

Given two simplicial complexes $K_1$ and $K_2$, suppose the two PHTs have vines $\{\gamma_{1,i}\}_{i=1}^{n_1}$ and $\{\gamma_{2,j}\}_{j=1}^{n_2}$ respectively, and write the projection of the respective vine to the diagonal as $\gamma_{k,i}^\Delta$. 
We construct the complete bipartite graph $G$ with vertex set $U \cup V$ where $U$ and $V$ each have $n_1+n_2$ vertices, each associated to one of the vines from either vineyard.
In $U$ we have the vines $\{\gamma_{1,i}\}_i$ followed by the projections $\{\gamma_{2,j}^\Delta\}_j$; the reverse holds for $V$. 
Writing $\gamma_u$ for the vine associated to vertex $u$, the edge weights are given by
\begin{equation*}
    c(u,v) = 
    \begin{cases}
        \|\gamma_u(\omega) - \gamma_v(\omega)\|_\infty & \text{if } \gamma_u(\omega) \not \in \Delta \text{ or } \gamma_v(\omega) \not \in \Delta\\
        \|\gamma_u(\omega) - \Delta\|_\infty & \text{if } \gamma_v(\omega) \in \Delta \\
        \|\gamma_v(\omega) - \Delta\|_\infty & \text{if } \gamma_u(\omega) \in \Delta \\
        0 & \text{else}.
    \end{cases}
\end{equation*}
See Figure~\ref{fig:WeightFunc} for an example of the weight functions.

Given the above construction, let $n = \max\{\abs{\text{ext}^{K_1}(V)}, \abs{\text{ext}^{K_2}(V)}\}$. We have at most $2n$ nodes in each vertex set $U$ and $V$ and thus at most $4n^2$ edges in $G$.
Therefore, the kinetic heap hourglass maintains $4n^2$ elements, resulting in a runtime of $O((4n^2)^3\log (4n^2)) = O(n^6\log n)$.

\begin{figure}
    \centering
    \includegraphics[width=\linewidth]{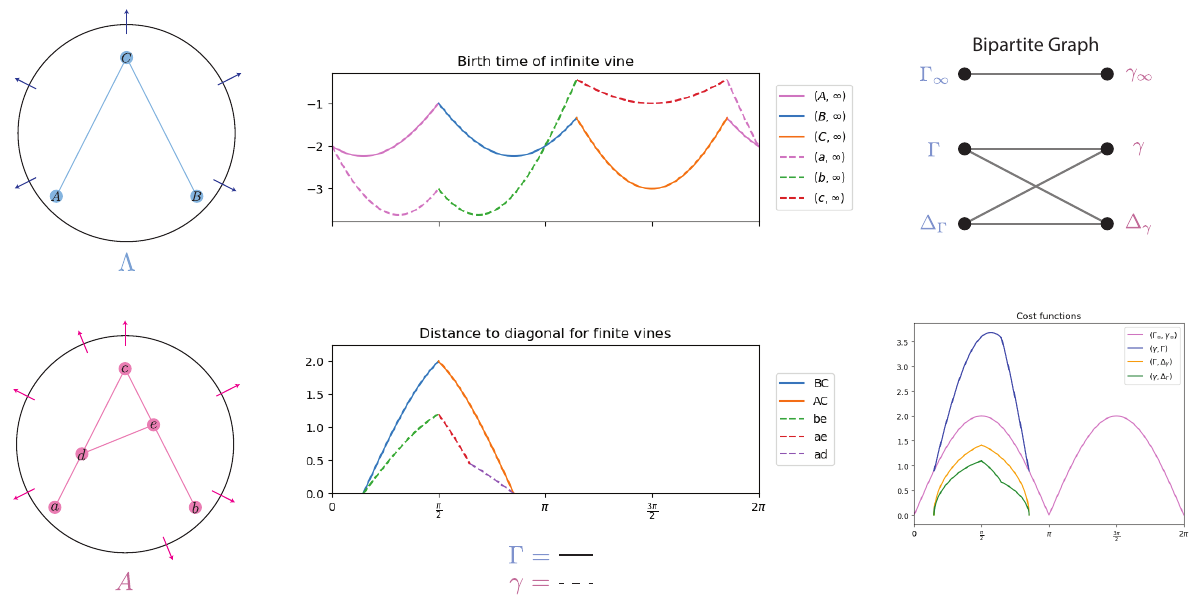}
    \caption{An example of the weight function $\wt$. The two figures in the middle visualize the behavior of the vines. On the top right is the bipartite graph representation, and on the bottom are the cost functions.}
    \label{fig:WeightFunc}
\end{figure}

\section{Conclusions}

In this paper, we introduced the \emph{kinetic hourglass}, a novel kinetic data structure designed for maintaining the bottleneck distance for graphs with continuously changing edge weights. 
The structure incorporates two kinetic priority queues, which can be either the deterministic kinetic heap or the randomized kinetic hanger. 
Both versions are straightforward to implement.

We applied this data structure to compute the integrated bottleneck distance between two persistent homology transforms of geometric simplicial complexes in $\R^2$, achieving $O(n^6 \log n)$ complexity.
This result has since been improved by Kerber and Wang \cite{KerberWang2025}, who achieve $\tilde{O}(n^5)$ for the integral objective and $\tilde{O}(n^3)$ for the maximum bottleneck distance in $\R^2$ using different techniques.

In the future, we hope to improve the runtime for this data structure. 
In particular, the augmenting path search requires $O(n)$ time, falling short of the efficiency goals in the kinetic data structure (KDS) framework. 
Moreover, when comparing PHTs with $n$ vertices, the kinetic hourglass holds $n^2$ elements, which can be computationally expensive. 
This method can immediately be extended to study the extended persistent homology transform (XPHT) to compare objects that have different underlying topologies \cite{TurnerXPHT2024}.
Further, the nature of this data structure means that it is not immediately extendable to the Wasserstein distance case, however, we would like to build a modified version that will work for that case.
Finally, since the kinetic hourglass data structure also has the potential to compare more general vineyards that extend beyond the PHT, it will be interesting in the future to find further applications.

\section*{Acknowledgments}
The work was supported in part by the National Science Foundation through grants CCF-2142713, CCF-2106578, and CCF-2107434.
EXW acknowledges support from the Austrian
Science Fund (FWF) grant number P~33765-N.

\bibliographystyle{plain}
\bibliography{hourglassBib}

\begin{thebibliography}{10}

\bibitem{Agarwal2005Center}
Pankaj~K Agarwal, Mark De~Berg, Jie Gao, Leonidas~J Guibas, and Sariel Har-Peled.
\newblock Staying in the middle: Exact and approximate medians in {$R^1$} and {$R^2$} for moving points.
\newblock In {\em CCCG}, pages 43--46, 2005.

\bibitem{Agarwal2002kdTree}
Pankaj~K. Agarwal, Jie Gao, and Leonidas~J. Guibas.
\newblock Kinetic medians and kd-trees.
\newblock In Rolf M{\"o}hring and Rajeev Raman, editors, {\em Algorithms --- ESA 2002}, pages 5--17, Berlin, Heidelberg, 2002. Springer Berlin Heidelberg.

\bibitem{Agarwal2000}
Pankaj~K. Agarwal and Micha Sharir.
\newblock Davenport–schinzel sequences and their geometric applications.
\newblock In J.-R. Sack and J.~Urrutia, editors, {\em Handbook of Computational Geometry}, pages 1--47. North-Holland, Amsterdam, 2000.

\bibitem{Arya2024}
Shreya Arya, Barbara Giunti, Abigail Hickok, Lida Kanari, Sarah McGuire, and Katharine Turner.
\newblock Decomposing the persistent homology transform of star-shaped objects, 2024.

\bibitem{GuibasKDS1999}
Julien Basch, Leonidas~J Guibas, and John Hershberger.
\newblock Data structures for mobile data.
\newblock {\em Journal of Algorithms}, 31(1):1--28, 1999.

\bibitem{BaschGuibasHeapOG1997}
Julien Basch, Leonidas~J. Guibas, and G.~D. Ramkumar.
\newblock Sweeping lines and line segments with a heap.
\newblock In {\em Proceedings of the Thirteenth Annual Symposium on Computational Geometry}, SCG '97, page 469–471, New York, NY, USA, 1997. Association for Computing Machinery.

\bibitem{BaschGuibas2003Curves}
Julien Basch, Leonidas~J. Guibas, and G.~D. Ramkumar.
\newblock Reporting red---blue intersections between two sets of connected line segments.
\newblock {\em Algorithmica}, 35(1):1--20, Jan 2003.

\bibitem{Cabello2024Matching}
Sergio Cabello, Siu-Wing Cheng, Otfried Cheong, and Christian Knauer.
\newblock {Geometric Matching and Bottleneck Problems}.
\newblock In Wolfgang Mulzer and Jeff~M. Phillips, editors, {\em 40th International Symposium on Computational Geometry (SoCG 2024)}, volume 293 of {\em Leibniz International Proceedings in Informatics (LIPIcs)}, pages 31:1--31:15, Dagstuhl, Germany, 2024. Schloss Dagstuhl -- Leibniz-Zentrum f{\"u}r Informatik.

\bibitem{Wang2024}
Erin~Wolf Chambers, Elizabeth Munch, Sarah Percival, and Elena~Xinyi Wang.
\newblock A distance for geometric graphs via the labeled merge tree interleaving distance.
\newblock {\em arXiv preprint arXiv:2407.09442}, 2024.

\bibitem{ChewKedem1992}
L.~Paul Chew and Klara Kedem.
\newblock Improvements on geometric pattern matching problems.
\newblock In Otto Nurmi and Esko Ukkonen, editors, {\em Algorithm Theory --- SWAT '92}, pages 318--325, Berlin, Heidelberg, 1992. Springer Berlin Heidelberg.

\bibitem{CohenSteiner2007Stability}
David Cohen-Steiner, Herbert Edelsbrunner, and John Harer.
\newblock Stability of persistence diagrams.
\newblock {\em Discrete {\&} Computational Geometry}, 37(1):103--120, Jan 2007.

\bibitem{Morozov2006Vineyard}
David Cohen-Steiner, Herbert Edelsbrunner, and Dmitriy Morozov.
\newblock Vines and vineyards by updating persistence in linear time.
\newblock In {\em Proceedings of the Twenty-Second Annual Symposium on Computational Geometry}, SCG '06, page 119–126, New York, NY, USA, 2006. Association for Computing Machinery.

\bibitem{CurryPHT2018}
Justin Curry, Sayan Mukherjee, and Katharine Turner.
\newblock How many directions determine a shape and other sufficiency results for two topological transforms.
\newblock {\em Transactions of the American Mathematical Society, Series B}, 2018.

\bibitem{daFonseca2004Hanger}
Guilherme~D. {da Fonseca}, Celina~M.H. {de Figueiredo}, and Paulo~C.P. Carvalho.
\newblock Kinetic hanger.
\newblock {\em Information Processing Letters}, 89(3):151--157, 2004.

\bibitem{Dey2018Bottleneck}
Tamal~K. Dey and Cheng Xin.
\newblock {Computing Bottleneck Distance for 2-D Interval Decomposable Modules}.
\newblock In Bettina Speckmann and Csaba~D. T\'{o}th, editors, {\em 34th International Symposium on Computational Geometry (SoCG 2018)}, volume~99 of {\em Leibniz International Proceedings in Informatics (LIPIcs)}, pages 32:1--32:15, Dagstuhl, Germany, 2018. Schloss Dagstuhl -- Leibniz-Zentrum f{\"u}r Informatik.

\bibitem{DeyWang2017}
Tamal~Krishna Dey and Yusu Wang.
\newblock {\em Computational {{Topology}} for {{Data Analysis}}}.
\newblock Cambridge University Press, 1 edition, 2017.

\bibitem{EdelsHarer2010}
Herbert Edelsbrunner and John Harer.
\newblock {\em Computational Topology - an Introduction.}
\newblock American Mathematical Society, 2010.

\bibitem{Efrat2001}
A.~Efrat, A.~Itai, and M.~J. Katz.
\newblock Geometry helps in bottleneck matching and related problems.
\newblock {\em Algorithmica}, 31(1):1--28, Sep 2001.

\bibitem{GhristPHT2018}
Robert Ghrist, Rachel Levanger, and Huy Mai.
\newblock Persistent homology and {Euler} integral transforms.
\newblock {\em Journal of Applied and Computational Topology}, 2(1):55--60, Oct 2018.

\bibitem{GuibasKDS1998STA}
Leonidas~J. Guibas.
\newblock Kinetic data structures: a state of the art report.
\newblock In {\em Proceedings of the Third Workshop on the Algorithmic Foundations of Robotics on Robotics: The Algorithmic Perspective: The Algorithmic Perspective}, WAFR '98, page 191–209, USA, 1998. A. K. Peters, Ltd.

\bibitem{GuibasSurvay}
Leonidas~J Guibas and Marcel Roeloffzen.
\newblock Modeling motion.
\newblock In Csba~D. Toth, Joseph O'Rourke, and Jacob~E. Goodman, editors, {\em Handbook of Discrete and Computational Geometry (3rd ed.)}, chapter~57, pages 1117 -- 1134. Chapman and Hall/CRC, 2017.

\bibitem{HopcroftKarp1973}
John~E. Hopcroft and Richard~M. Karp.
\newblock An $n^{5/2} $ algorithm for maximum matchings in bipartite graphs.
\newblock {\em SIAM J. Comput.}, 2(4):225–231, Dec 1973.

\bibitem{KerberGeometryHelps2017}
Michael Kerber, Dmitriy Morozov, and Arnur Nigmetov.
\newblock Geometry helps to compare persistence diagrams.
\newblock {\em ACM J. Exp. Algorithmics}, 22, sep 2017.

\bibitem{KerberWang2025}
Michael Kerber and Elena~Xinyi Wang.
\newblock Computing the bottleneck distance between persistent homology transforms.
\newblock \url{https://arxiv.org/abs/2512.00821}, 2025.

\bibitem{Kim2021}
Woojin Kim and Facundo M{\'e}moli.
\newblock Spatiotemporal persistent homology for dynamic metric spaces.
\newblock {\em Discrete {\&} Computational Geometry}, 66(3):831--875, Oct 2021.

\bibitem{TurnerPHT2014}
Katharine Turner, Sayan Mukherjee, and Doug~M. Boyer.
\newblock {Persistent homology transform for modeling shapes and surfaces}.
\newblock {\em Information and Inference: A Journal of the IMA}, 3(4):310--344, 12 2014.

\bibitem{TurnerXPHT2024}
Katharine Turner, Vanessa Robins, and James Morgan.
\newblock The extended persistent homology transform of manifolds with boundary.
\newblock {\em Journal of Applied and Computational Topology}, May 2024.

\bibitem{Xian2022Crocker}
Lu~Xian, Henry Adams, Chad~M. Topaz, and Lori Ziegelmeier.
\newblock Capturing dynamics of time-varying data via topology.
\newblock {\em Foundations of Data Science}, 4(1):1--36, 2022.

\end{thebibliography}

\end{document}